\useunder{\uline}{\ul}{}
\theoremstyle{plain}
\newtheorem{theorem}{Theorem}[section]
\newtheorem{proposition}[theorem]{Proposition}
\theoremstyle{definition}
\newtheorem{definition}[theorem]{Definition}
\theoremstyle{remark}
\title{\papertitle}
\author{%
  Hisham Husain$^{*}$
  \And
  Julien Monteil$^{+}$
  }
\begin{document}
 





\maketitle

\begin{abstract}
Latent variable collaborative filtering methods have been a standard approach to modelling user-click interactions due to their simplicity and effectiveness. However, there is limited work on analyzing the mathematical properties of these methods in particular on preventing the overfitting towards the identity, and such methods typically utilize loss functions that overlook the geometry between items. In this work, we introduce a notion of generalization gap in collaborative filtering and analyze this with respect to latent collaborative filtering models. We present a geometric upper bound that gives rise to loss functions, and a way to meaningfully utilize the geometry of item-metadata to improve recommendations. We show how these losses can be minimized and gives the recipe to a new latent collaborative filtering algorithm, which we refer to as GeoCF, due to the geometric nature of our results. We then show experimentally that our proposed GeoCF algorithm can outperform other all existing methods on the Movielens20M and Netflix datasets, as well as two large-scale internal datasets. In summary, our work proposes a theoretically sound method which paves a way to better understand generalization of collaborative filtering at large.
\end{abstract}

\section{Introduction}\label{sec:introduction}
The importance of recommender systems cannot be overstated and play a pivotal role in many e-commerce and entertainment applications. The general problem of recommendation is to consume user-item interactions, and learn meaningful relationships to better recommend items that the user has not seen, but is likely to enjoy. A number of different algorithmic solutions, from neighbourhood methods, e.g.~\cite{koren2008factorization}, linear regressions, e.g.~\cite{ning2011slim}, factorization models, e.g.~\cite{rendle2022revisiting}, to sequential models, e.g.~\cite{wu2020sse}, and latent generative models, e.g.~\cite{shenbin2020recvae}, have been proposed throughout the years to solve this practical problem. Significant effort has been dedicated to surpass the performance of state-of-the-art methods on public datasets, focusing on learning the best representation of the sparse binarized user-item interaction matrix. Less interest has been vested on leveraging item and user metadata information to boost the learning capabilities of the algorithms, with most existing approaches leveraging regularizers to ground the learning of the latent variables, e.g.~\cite{ning2012sparse}, in particular in the context of cold-start recommendations, e.g.~\cite{wang2023equivariant}.

One particular top-performing method involves latent collaborative filtering, which seeks to reconstruct the user-item interaction with the use of a generative process. The assumption behind the success that generative models have in reconstructing the click history may be that the stochastic process helps in preventing the overfitting towards the identity, which is considered to be the principal cause behind the loss of predictive capability~\cite{steck2020autoencoders}. One popular approach in this family is the MultiVAE \cite{liang2018variational}, and its subsequent variations~\cite{vanvcura2021deep,kim2019enhancing} which dominate the MovieLens20M and Netflix leaderboards, allow for flexible architectures, and are efficient. Despite the success, there are two main caveats. Firstly, the majority of these approaches are agnostic to item metadata and second, there is little theoretical understanding of these approaches from a generalization perspective. While we discuss these as two separate problems, they form the underpinning of a more compelling question we seek to answer: \textit{Can we derive a latent collaborative filtering approach that achieves theoretical guarantees on generalization via item metadata?}

In order to answer this question, we first need to define a notion of generalization for latent collaborative filtering. The closest concept to generalization has been commonly recollected as cold-start \cite{volkovs2017dropoutnet}, and while the literature on the topic has become denser, the endeavours have been largely empirical, see e.g.~\cite{pan2019warm,xu2022alleviating,monteil2024marec}. In Statistical Learning Theory (SLT)~\cite{bartlett2002rademacher}, the motivation behind generalization is to counter overfitting. Despite the prevalence of overfitting in collaborative filtering~\cite{steck2020autoencoders}, there is surprisingly little work in the formalization of generalization. In this work, we borrow machinery from SLT and define, to the best of our knowledge, the first formalization of a generalization gap in latent collaborative filtering. We show that our definition aligns well and parallels existing intuition in latent model CF such as the trepidation of overfitting to the identity model. 

We then present a geometrically dependent upper bound on the generalization gap for an arbitrary latent collaborative filtering method, that gives rise to loss functions. We show how these losses can be minimized and gives the recipe to a new latent collaborative filtering algorithm, which we call GeoCF, due to the geometric nature of our results. 

In our geometrical upper bound, we find a convergence rate of $n^{-\frac{1}{d^{*}}}$ ($n$ being the number of observed data points) where $d^{*}$ is the \textit{intrinsic} dimensionality of our user-item distribution induced by the item metadata. The quantity $d^{*}$ depends directly on how well the item space is metrized by embeddings, thus emphasizing the importance item metadata plays for performance from a theoretical perspective. Finally, we show that GeoCF empirically outperforms existing baselines on popular baselines such as Movielens20M and Netflix datasets.

In summary, our contributions are three-fold:
\begin{itemize}
    \item A formalization of generalization gap in latent collaborative filtering methods, which parallels and subsumes our intuition about existing methods.
    \item (GeoCF) A new latent collaborative filtering algorithm motivated directly from the geometric generalization bound that utilizes item meta-data for recommendation.
    \item Convergence result for GeoCF, establishing the importance of item metadata in improving generalization from a formal perspective. 
\end{itemize}

\section{Preliminaries on Collaborative Filtering}
In this section, we will first introduce preliminaries for latent collaborative filtering models followed by a brief summary of the optimal transport tools to be used in building the proposed algorithm.

\paragraph{Collaborative Filtering} We use $\Omega$ to denote a compact Polish space and denote $\Sigma$ as the standard Borel $\sigma$-algebra on $\Omega$, $\mathbb{R}$  denotes the real numbers and $\mathbb{N}$ natural numbers. Let $I \in \mathbb{N}$ denote the number of all items and we will consider $\Omega = \mathbb{R}^I$ to represent users where each $X \in \Omega$ corresponds to the click-vector of a specific user $X$. For simplicity, we will binarize this vector: for item $i$, user $X$ clicked on it if $X^{i} = 1$ and $X^{i} = 0$ otherwise. We use $\Delta(\Omega)$ to denote the set of all probability measures over $\Omega$. 

We observe a set of item-user click history consisting of $n$ users, which we denote by $\mathsf{X} = \braces{X_i}_{i=1}^n$ where each $X_i \in \Omega$. The goal of collaborative filtering is to recommend items to users by first identifying similar users based on their history. For example, if the history of two users $X_i$ and $X_j$ differ only by one click, then it is likely that item is of interest to both users. Thus, we would like to produce a function $F$ that takes in a user $X \in \Omega$ and outputs a item-click vector $\tilde{X} \in \Omega$ that the user would most likely be interested in. Thus, we can formulate this as a least regression problem:
\begin{align}
    \min_{F : F \neq \operatorname{Id}} \nrm{\mathsf{X} -  \mathsf{X}F^\top},
\end{align}
where we treat $\mathsf{X} \in \mathbb{R}^{n \times I}, F \in \mathbb{R}^{I \times I}$. We want to avoid the identity solution, and this is often enforced in the form of a regularizer, e.g.~\cite{steck2019embarrassingly}. Various developments in this line of work study how to decompose $F$ to incorporate such intuition. A straightforward way to perform collaborative filtering is via neighbourhood methods, which consist of 
	computing similarities across items or users, where the similarity is evaluated as a dot product on click vectors, and possibly 
	item and user metadata~\cite{4781121}. The Netflix competition and subsequent research works on available public datasets have shown that SLIM~\cite{ning2011slim} and matrix factorization techniques tend to 
	outperform neighbourhood methods~\cite{koren2009matrix,steck2020admm}. A push for neural approximations 
	of matrix factorization took place~\cite{he2017neural,covington2016deep}, with the conclusion that the proposed 
	MLP architectures fail to learn a better nonlinear variant of the dot product and are outperformed by careful implementations of  
	matrix factorization~\cite{dacrema2019we,rendle2020neural,rendle2022revisiting}.  More recently, some neural approaches have obtained state of the art results. The variational autoencoder approach~\cite{liang2018variational} (MultiVAE) is amongst the most competitive neural approaches but the least squares approximation of SLIM, i.e., a linear autoencoder with projections to ensure the zero diagonal constraint (EASE)~\cite{steck2019embarrassingly,steck2020admm}, still beat it on 2 out 3 public datasets. It was shown in~\cite{kim2019enhancing} that augmenting the MultiVAE approach with flexible priors and gating mechanisms led to state of the art performance. The MultiVAE approach was also refined and ensembled with a neural EASE to beat all baselines on MovieLens20M and Netflix~\cite{vanvcura2021deep}. The success of Deep Cross Networks on learning to rank tasks~\cite{qin2021neural} also showed the importance of explicitly encoding the dot product in the network architecture. Finally, a more recent approach related to the proposed framework in this paper is SinkhornCF \cite{li2021sinkhorn}, which replaces the reconstruction error by the Sinkhorn distance, a loss that will help guide the model to recommend items based on their metadata. We present the preliminaries of the Sinkhorn loss and optimal transport in the next section. 

\paragraph{Optimal Transport and Divergences}
When it comes to comparing two probability distributions $P,Q \in \Delta(\Omega)$, a very well regarded and canonical choice is the Wasserstein distance \cite{villani2009optimal}, which is defined using a ground cost $c: \mathcal{X} \times \mathcal{X} \to \mathbb{R}_{+}$
\begin{align}
\mathscr{W}_c(P,Q) := \inf_{\pi \in \Pi(P,Q)} \braces{\int_{\mathcal{X} \times \mathcal{X}} c(x,y) d\pi(x,y) },
\end{align}
where $\Pi(P,Q)$ is the \emph{set of couplings} between $P$ and $Q$
\begin{equation}
\Pi(P,Q) = \biggl\{ \pi \in \mathscr{P}(\mathcal{X} \times \mathcal{X}) :  \int_{\mathcal{X}} \pi(x,y) dx = P, \int_{\mathcal{X}} \pi(x,y) dy = Q \biggl\}.
\end{equation}
Finding the exact value of the Wasserstein distance is expensive and one often resorts to regularizing the objective with the entropy of $\pi$, yielding the following objective:
\begin{equation}
    S^{\varepsilon}(P,Q) = \inf_{\pi \in \Pi(P,Q)} \biggl\{\inf_{\mathcal{X} \times \mathcal{X}}c(x,y) d\pi(x,y) + \varepsilon \int_{\mathcal{X} \times \mathcal{X}} \log \pi(x,y) d\pi(x,y)  \biggl\}.
\end{equation}
This objective can be solved in a more stable and tractable manner, referred to as the Sinkhorn algorithm \cite{cuturi2013sinkhorn}. Since $S^{\varepsilon}$ serves as a viable approximation to $\mathscr{W}_c$, it has been popularly used in various machine learning applications. Another choice of divergence we will utilize is the Maximum Mean Discrepancy (MMD) with respect to a kernel $k$ $d_{\operatorname{MMD}}(P,Q)$ which is typically used for its computational convenience.
\section{Generalization for Collaborative Filtering}
\label{sec:generalization}
In this section, we present to the best of our knowledge, the first formal pursuit of a generalization study for latent collaborative filtering. We begin by introducing a notion of generalization error, akin to the generalization gap existing in supervised learning \cite{bartlett2002rademacher}. Letting $2^{\Omega}$ refer to the power set of $\Omega$, we represent a collaborative filtering algorithm as a mapping $C: \Omega \times 2^{\Omega} \to \Delta(\Omega)$ so that $C(\X, \mathcal{D})$ represents the distribution of possible recommendations for user $\X$ upon training with data $\mathcal{D}$.
\begin{definition}[Collaborative Filtering Generalization Error]
Let $P \in \Delta(\Omega)$, let $C: \Omega \times 2^{\Omega} \to \Delta(\Omega)$ denote a collaborative filtering algorithm, then the error of $C$ under $n$ samples is defined as \begin{align}\mathcal{E}_{n,P}(C) := \E_{X^n \sim P^n}\left[ d_{\operatorname{TV}}\bracket{\int_{\mathcal{X}} C(\X,X^n) dP(\X) , P}\right],\end{align}
where $d_{\operatorname{TV}}(\mu,\nu) = \sup_{A \in \Sigma} \bracket{\mu(A) - \nu(A)}$, is the total-variation distance between distributions $\mu,\nu \in \Delta(\Omega)$.\label{def}
\end{definition}
We are interested in analyzing how well the collaborative filtering algorithm $C$ consumes a dataset $X^n$ of size $n$ from $P$ to reconstruct the user distribution (and their preferences). Thus, we analyze the discrepancy using the total variation metric, and take the expectation across all datasets of size $n$.
Note that the quantity $\mathcal{E}_{n,P} \in [0,1]$ and thus we want to make sure it decreases to $0$ at a fast rate as $n \to \infty$. We now show that the identity solution characterized as $C_{\operatorname{ID}}(\X,\mathcal{D}) = \frac{1}{\card{\mathcal{D}}}\sum_{\X' \in \mathcal{D}} \delta_{\X'}(\X)$ that overfits to the data achieves $1$, which is the maximal possible value.
\begin{proposition}
Let $P \in \Delta(\Omega)$ then $\mathcal{E}_{n,P}(C_{\operatorname{ID}}) = 1$ for any $n < \infty$.
\end{proposition}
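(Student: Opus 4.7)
The plan is to unfold the definition of $\mathcal{E}_{n,P}$ at $C_{\operatorname{ID}}$, exploit that the identity rule ignores the query $\X$ so that the mixing integral collapses to the empirical distribution $\hat{P}_n := \frac{1}{n}\sum_{i=1}^{n} \delta_{X_i}$, and then argue that $\hat{P}_n$ and $P$ are mutually singular, forcing the total variation distance to saturate at $1$ on every realisation of the sample.

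First I would substitute $C_{\operatorname{ID}}(\X, X^n) = \frac{1}{n}\sum_{i=1}^{n} \delta_{X_i}$ into the generalization gap and observe that the integrand has no dependence on $\X$, so the outer integration against $dP(\X)$ drops away:
\begin{equation}
\int_{\mathcal{X}} C_{\operatorname{ID}}(\X, X^n) \, dP(\X) \;=\; \hat{P}_n .
\end{equation}
This reduces the claim to showing $\E_{X^n \sim P^n}\bracket{d_{\operatorname{TV}}(\hat{P}_n, P)} = 1$. Second, I would evaluate the supremum defining $d_{\operatorname{TV}}$ on the finite test set $A_n := \{X_1, \ldots, X_n\}$: trivially $\hat{P}_n(A_n) = 1$, while $P(A_n) = 0$ when $P$ is non-atomic (the continuous view $\Omega = \mathbb{R}^I$ taken in the preliminaries). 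This yields the lower bound $d_{\operatorname{TV}}(\hat{P}_n, P) \geq 1$, which matches the universal upper bound $d_{\operatorname{TV}} \leq 1$ between probability measures. Since this pointwise identity holds for every realisation of $X^n$, the outer expectation is preserved and the claim follows immediately.

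The only genuine obstacle is the implicit non-atomicity of $P$. Without it, $\hat{P}_n$ may fortuitously place mass on atoms of $P$ and the best one gets is $d_{\operatorname{TV}}(\hat{P}_n, P) = 1 - P(A_n) < 1$; in the strict binarised setting $\Omega = \{0,1\}^I$ with $|\operatorname{supp}(P)|$ comparable to $n$ the equality would fail outright. I would therefore flag this assumption precisely at the $P(A_n) = 0$ step, either invoking the continuous-space view already in force for the Wasserstein/MMD machinery or imposing a mild diffuseness condition such as $P$ assigning zero mass to every finite set. Once that assumption is granted, every remaining step is a one-line computation.
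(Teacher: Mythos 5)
Your proof takes essentially the same route as the paper's one-line argument---the identity rule collapses to the empirical measure, which is mutually singular with $P$, forcing the total variation distance to saturate at $1$---so it is correct and complete. In fact it is more careful than the paper's own proof, which asserts only that $C_{\operatorname{ID}}$ is not absolutely continuous with respect to $P$ (a condition that does not by itself force $d_{\operatorname{TV}}=1$); your explicit evaluation on the test set $A_n=\{X_1,\dots,X_n\}$ with $P(A_n)=0$, together with your flagging of the non-atomicity assumption---which is genuinely in tension with the binarized setting $\Omega=\{0,1\}^I$ suggested in the preliminaries---makes precise exactly what the paper leaves implicit.
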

\begin{proof}
    Note that the distribution $C_{\operatorname{ID}}$ is not absolutely continuous with respect to $P$ and thus the Total Variation Distance will be $1$ with probability $1$.
    \label{prop1}
\end{proof}
While this is not a surprise and anticipated, as discussed in \cite{steck2020autoencoders}, we would like to emphasize this is the first negative result connecting to a formal notion of generalization. In the proof, we can see that the negative result stems from the fact that the model is unable to recommend for unseen users and so we want the resulting distribution $C(\X, \mathcal{D})$ to exemplify some smoothness. We now analyze the behaviour of $\mathcal{E}_{n,P}$ for latent collaborative filtering algorithms.

In such models, we have a latent space $\mathcal{Z}$ with a prior distribution $\gamma(z) \in \Delta(\mathcal{Z})$. We use $\upvartheta$ and $\xi$ to denote the parameters of a decoder and encoder distribution $p_{\upvartheta}(\X \mid z)$ and $q_{\xi}(z \mid \X)$ respectively. We will also assume that we have a deterministic decoder so that $p_{\upvartheta}(X\mid z) = \delta_{D_{\upvartheta}(z)}(X)$ where $D_{\upvartheta}: \mathcal{Z} \to \mathcal{X}$. During training time, we minimize a loss $\mathsf{L}(\upvartheta,\xi)$ and the resulting collaborative filtering algorithm for recommendation on a user $\X$ is
\begin{align}
    C_{\upvartheta,\xi}(\X,\mathcal{D}) = \int_{\mathcal{Z}} p_{\upvartheta}(\cdot \mid z) q_{\xi}(z \mid \X)dz.
\end{align}
The above describes the reconstruction process given by the encoder and decoder, which we expect will be much smoother than in the case of the dirac delta, as seen in the identity case. Our goal is to provide an analysis of $\mathcal{E}_{n,P}\bracket{C_{\upvartheta,\xi}}$ and we do so by exploiting the geometry of the space in $\Omega$. To this end, assume we have a cost $c: \Omega \times \Omega \to \mathbb{R}$ that metrizes $\Omega$, which allows us to define the following machinery from \cite{weed2019sharp}.
\begin{definition}[$c$-Covering Numbers]
For a set $S \subseteq \Omega$, we denote $N_{\eta}(S)$ to be the $\eta$\emph{-covering number} of $S$, which is the smallest $m \in \mathbb{N}_{*}$ such that there exists closed balls $B_1,\ldots,B_m$ of radius $\eta$ with $S \subseteq \bigcup_{i=1}^m B_i$. For any $P \in \Delta(\Omega)$, the $(\eta, \tau)$\emph{-dimension} is $d_{\eta}(P, \tau) := \frac{\log N_{\eta}(P, \tau) }{-\log \eta},$ where $N_{\eta}(P,\tau) := \inf\braces{N_{\eta}(S) : P(S) \geq 1 - \tau}.$
\end{definition}
\begin{definition}[$1$-Upper $c$-Wasserstein Dimension]
The $1$\emph{-Upper Wasserstein dimension} of any $P \in \Delta(\Omega)$ is $d^{*}_{c}(P) := \inf\braces{s \in (2, \infty) : \limsup_{\eta \to 0} d_{\eta} (P, \eta^{\frac{s}{s -  2}}) \leq s }.$
\end{definition}

The quantity $d^{*}_c(P)$ represents the \textit{intrinsic} dimension of where $P$ is largely supported. For example, if $P$ lies on a low dimensional manifold in some high dimensional ambient space, then $d^{*}_c(P)$ would correspond to the dimension of said manifold. We now present an upper bound on $\mathcal{E}_{n,P}(C_{\upvartheta, \xi})$.
\begin{theorem}
Let $\mathcal{F}_c$ denote the set of $1$-Lipschitz functions with respect to $c$ and $\mathcal{H}_k$ denote the unit ball of Reproducing Kernel Hilbert Space functions with kernel $k$. For any $P \in \Delta(\Omega)$ and $\lambda > 0$, consider the minimizers $
    \upvartheta^{*}, \xi^{*} = \arginf_{\upvartheta,\xi} \mathsf{L}(\upvartheta, \xi).$
It then holds that with high probability
\begin{equation}
    \mathcal{E}_{n,P}(C_{\upvartheta^{*}, \xi^{*}}) \leq \delta(\upvartheta^{*}, \xi^{*}; P) + \frac{1}{4} \E \biggl[ \mathscr{W}_c\bracket{\hat{P}^n, p_{\upvartheta^{*}} \# \gamma}
    + d_{\operatorname{MMD}}\bracket{q_{\xi^{*}} \# 
    \hat{P}^n, \gamma}\biggr] + An^{-\frac{1}{d^{*}_c(P)}},
\end{equation}
for some $A > 0$ where $q_{\xi^{*}}\# \hat{P}^n (z)= \int_{\mathcal{X}} q_{\xi^{*}}(z \mid \X) d\hat{P}^n(\X)$, $p_{\upvartheta^{*}}\# \gamma = \int_{\mathcal{Z}} p_{\upvartheta^{*}}(\cdot \mid z) \gamma(z)$ and $\delta(\upvartheta^{*}, \xi^{*}; P) = 2 \inf_{h \in \mathcal{F}_c} \nrm{\log \frac{P}{p_{\upvartheta^{*}} \# \gamma} - h}_{\infty} + 2\inf_{h \in \mathcal{H}_k}\nrm{\log \frac{\gamma}{q_{\xi^{*}} \# P} - h}_{\infty}$.
\label{main_result}
\end{theorem}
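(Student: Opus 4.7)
The strategy is to rewrite the generalization error as an expected total variation, split it by a triangle inequality through the prior $\gamma$, bound each resulting $d_{\operatorname{TV}}$ by the matching integral probability metric (IPM) plus a log-density approximation error, and then transition from the population to the empirical measure via the Weed--Bach rate. Since the decoder is deterministic, the user-averaged recommendation satisfies
$\int_{\Omega} C_{\upvartheta^{*},\xi^{*}}(\X, X^n) dP(\X) = p_{\upvartheta^{*}}\# q_{\xi^{*}}\# P$. Inserting $p_{\upvartheta^{*}}\#\gamma$ as an intermediate measure and applying the total-variation data-processing inequality for the pushforward by $p_{\upvartheta^{*}}$, one obtains
\begin{equation*}
d_{\operatorname{TV}}\bracket{p_{\upvartheta^{*}}\#q_{\xi^{*}}\#P,\, P} \leq d_{\operatorname{TV}}\bracket{q_{\xi^{*}}\#P,\, \gamma} + d_{\operatorname{TV}}\bracket{p_{\upvartheta^{*}}\#\gamma,\, P},
\end{equation*}
separating a latent-space comparison and a data-space comparison.

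The technical core is the lemma $d_{\operatorname{TV}}(\mu,\nu) \leq \tfrac{1}{4}\, d_{\mathcal{F}}(\mu,\nu) + 2\inf_{h \in \mathcal{F}}\nrm{\log(d\mu/d\nu) - h}_{\infty}$, where $d_{\mathcal{F}}$ is the IPM over the class $\mathcal{F}$. Instantiated with $\mathcal{F} = \mathcal{F}_c$ on the data-space term (so $d_{\mathcal{F}} = \mathscr{W}_c$) and with $\mathcal{F} = \mathcal{H}_k$ on the latent-space term (so $d_{\mathcal{F}} = d_{\operatorname{MMD}}$), this produces exactly the two $\tfrac{1}{4}$-coefficient IPM contributions and the two approximation errors that define $\delta(\upvartheta^{*}, \xi^{*}; P)$. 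To prove the lemma, I would decompose $\log(d\mu/d\nu) = h + r$ with $\nrm{r}_\infty = \epsilon$: the identity $\E_\nu e^{h+r} = 1$ pins $\log \E_\nu e^h \in [-\epsilon, \epsilon]$, Donsker--Varadhan combined with Kantorovich/RKHS duality gives $\operatorname{KL}(\mu \| \nu) \leq d_{\mathcal{F}}(\mu,\nu) + 2\epsilon$ for $h$ in the unit ball of $\mathcal{F}$, and TV is then controlled by a direct manipulation of $|e^{h+r}-1|$ (using $|e^{h+r}-1| \leq e^h|e^r - 1| + |e^h - 1|$ together with $|e^r - 1| \leq \epsilon e^\epsilon$) rather than by a naive Pinsker step, which preserves the linear IPM dependence with the claimed $\tfrac{1}{4}$ constant.

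The final step converts this population-level bound into an empirical one via the triangle inequalities $\mathscr{W}_c(P, p_{\upvartheta^{*}}\#\gamma) \leq \mathscr{W}_c(\hat{P}^n, p_{\upvartheta^{*}}\#\gamma) + \mathscr{W}_c(\hat{P}^n, P)$ and the analogous MMD split using the pulled-back kernel $\tilde{k}(x,y) := k(q_{\xi^{*}}(x), q_{\xi^{*}}(y))$. The Weed--Bach sharp rate then provides $\E[\mathscr{W}_c(\hat{P}^n, P)] \leq A' n^{-1/d^{*}_c(P)}$ with high probability, and the MMD remainder decays at the faster $O(n^{-1/2})$ rate (absorbed into the $A n^{-1/d^{*}_c(P)}$ term since $d^{*}_c(P) > 2$). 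Taking expectations over $X^n \sim P^n$ and collecting constants delivers the claimed bound. The main obstacle will be the key lemma: preserving a linear-in-IPM dependence with the precise $\tfrac{1}{4}$ constant (instead of the $\sqrt{\cdot}$ dependence that a vanilla Pinsker step would produce) requires the non-standard manipulation of $|e^{h+r} - 1|$ above and a careful accounting of constants through the log-density approximation so that the factor $2$ in $\delta$ lands correctly.
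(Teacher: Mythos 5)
Your proposal follows the paper's proof almost step for step: the same triangle inequality through the intermediate measure $p_{\upvartheta^{*}} \# \gamma$, the same data-processing reduction of the second term to $d_{\operatorname{TV}}(\gamma, q_{\xi^{*}} \# P)$, the same ``TV $\leq$ IPM plus log-density approximation error'' lemma instantiated once with $\mathcal{F}_c$ and once with $\mathcal{H}_k$, the same triangle inequalities to pass from $P$ to $\hat{P}^n$, and the same Weed--Bach-type rate for $\E[d_{\mathcal{F}_c}(P,\hat{P}^n)]$ with the MMD remainder absorbed at $O(n^{-1/2})$. The one place you genuinely depart is the internal proof of the key lemma. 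The paper obtains it by asserting $4\, d_{\operatorname{TV}}(P,Q) \leq d_{\operatorname{KL}}(P\,\|\,Q) + d_{\operatorname{KL}}(Q\,\|\,P)$ (which it calls Pinsker), expanding the symmetrized KL as $\int \log\tfrac{P}{Q}\,(P-Q)$, and then adding and subtracting $h$ and applying H\"older with $\nrm{P-Q}_1 \leq 2$. Your instinct that a vanilla Pinsker step only yields a square-root dependence on the IPM is well founded; note that the paper's linear inequality is in fact \emph{not} Pinsker's inequality (Pinsker bounds $4\, d_{\operatorname{TV}}^2$, and the linear form fails for nearby distributions, e.g.\ two close Bernoulli laws), so the paper's own route to the lemma is the weakest link of its argument and your attempt to replace it is reasonable. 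However, your substitute is not yet closed either: the Donsker--Varadhan step cleanly gives $d_{\operatorname{KL}}(\mu\,\|\,\nu) \leq d_{\mathcal{F}}(\mu,\nu) + 2\epsilon$, but in the final TV control the decomposition $\lvert e^{h+r}-1\rvert \leq e^{h}\lvert e^{r}-1\rvert + \lvert e^{h}-1\rvert$ leaves the term $\E_{\nu}[\lvert e^{h}-1\rvert]$, which is an expectation under $\nu$ alone rather than a difference of expectations, so it is not bounded by $d_{\mathcal{F}}(\mu,\nu)$ without a further argument or a boundedness assumption on $h$. You should flag that step as the genuine remaining gap in your lemma, rather than a matter of constant tracking.
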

\begin{proof}(Sketch, full proof in Appendix)
We apply a triangle inequality to the Total Variation (TV) distance in $\mathcal{E}_{n,P}$ over the latent space. We then utilize two applications of Pinskers inequality to relate the TV distances to the Kullback-Leibler divergence and Integral Probabiltiy Metrics (IPMs), whose dual forms correspond to the Wasserstein distance and the MMD. Applying concentration inequalities on these IPMs allows us to get fast rates of convergence for MMD and a dimensionality dependence when applied to the Wasserstein distance, resulting in the final inequality.
\end{proof}

The proof of this Theorem can be found in the Appendix. The above result tells us that the generalization gap is bounded by three terms. The first term is independent of $n$ and depends on how well $P$ and $\gamma$ can be approximated by the encoder and decoder distributions. In particular, noting that $\mathcal{H}_k$ are universal function approximators, these terms can be kept arbitrarily small. The second term is the sum of two expected gaps: the model $p_{\upvartheta^{*}}$ and $\hat{P}^n$, along with the gap between the prior distribution $\gamma$ and the encoder $q_{\xi^{*}}$.

\vspace{1cm}
\begin{figure}[h!]
    \centering
    \includegraphics[width=0.5\textwidth]{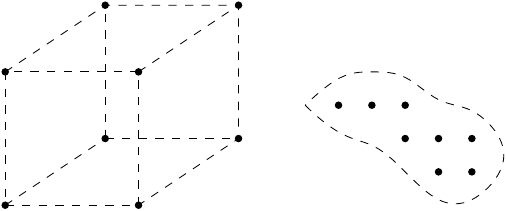}
    \caption{An illustration of the distribution $P$ embedded in the space $\Omega$ for two different choices of $c_I$. The left corresponds to when users are more spread out whereas the right corresponds to a more meaningful representation so that the distribution $P$ is supported on a lower dimensional manifold.}
    \label{fig:intrinsic}
\end{figure}
\vspace{1cm}

The third term is the rate of convergence, which depends on the complexity of our distribution $P$ and choice of metric $c_I$. To better understand this, observe that $d^{*}_c(P)$ is a measure of how well the support of $P$ can be covered by $\eta$-radius balls, where this radius is measured by $d(u,u') = \mathscr{W}_{c_{I}}(p_u, p_{u'})$. We illustrate in Figure \ref{fig:intrinsic} the user space $\Omega$ can change drastically depending on the metric $c_I$ used. 

We will now derive an algorithm in the next section inspired by the upper bound.

\section{GeoCF: utilizing geometry to close the generalization gap}\label{geocf}

We saw in the previous section the generalization gap of collaborative filtering can be lowered by controlling parameters $\upvartheta, \xi$ and metric $c$ of the upper bound:
\begin{equation}
      \underbrace{\E\left[\mathscr{W}_c\bracket{\hat{P}^n, p_{\upvartheta} \# \gamma}\right]}_{\mathclap{\text{Decoder}}}
    + \underbrace{\E\left[d_{\operatorname{MMD}}\bracket{q_{\xi} \# 
    \hat{P}^n, \gamma}\right]}_{\mathclap{\text{Encoder}}} + \underbrace{An^{-\frac{1}{d^{*}_c(P)}}}_{\mathclap{\text{Metric}}}.
\end{equation}
Schematically, we can derive an algorithm by first picking a choice of $c$ that helps decompose the user distribution $P$ into a more compact representation and secondly, minimizing the two terms with respect to $\upvartheta$ and $\xi$.
\paragraph{Metrizing $\Omega$ with $c$} Based on the theoretical result, we need to metrize $\Omega$ so that $d_{c}^{*}$ is reduced. Intuitively, this means we must find a way to separate users so that the natural distribution $P$ is compactly preserved, and for this task we rely on item metadata. Suppose we have an embedding function $E: \braces{1,\ldots,I} \to \mathbb{R}^k$ that embeds each of the $I$ items into a representation into $\mathbb{R}^k$. By considering a user $u \in \Omega$ as point-cloud distribution consisting of items they clicked on: $p_u = \frac{1}{\card{I_u}}\sum_{i \in I_u} \delta_{i}$ where $I_u$ is the set of items clicked on by user $u$, we can define a distance between users: $d_I(u,u') = \mathscr{W}_{c_I}(p_u,p_{u'})$
where $c_{I}(i, i') = \nrm{E(i) - E(i')}_2$ is used as the ground cost between items. The distance $d_I$ compares users by considering the semantic similarity of their respectively engaged items, and since our hypothesis is that $P$ naturally clusters users with shared interests together, it is natural that $d^{*}_{d_I}(P)$ results in a smaller quantity. In the extreme opposite case where there are no embeddings or $01$-loss is used, for example in MultiVAE (see \cite{li2021sinkhorn}[Section~2.3]), then the dimensionality of the support of $P$ will be higher which inturn leads to slow convergence.

\paragraph{Learning $\xi$} For the encoder parameters $\xi$, the most straight forward way is to minimize the term $d_{\operatorname{MMD}}(q_{\xi} \# \hat{P}^n, \gamma)$ which can indeed be computed in closed form with $O(m^2)$ sample complexity if we take $m$ samples from $\gamma$ and $q_{\xi} \# \hat{P}^n$.

\paragraph{Learning $\upvartheta$} Similar to the encoder, we can minimize decoder parameters $\upvartheta$ with respect to $\mathscr{W}_{d_I}\bracket{\hat{P}^n, p_{\upvartheta} \# \gamma}$. In order to compute this quantity, one may break it down using certain advances in optimal transport.

\begin{theorem}[\cite{tolstikhin2017wasserstein}]\label{thm:wae}
Let $P \in \Delta(\Omega)$ and for some $n \in \mathbb{N}$ and $\hat{P}^n$ correspond to the empirical distribution of $P$ from $n$ i.i.d samples. Let $\mathcal{Q} = \braces{\pi : \int_{\Omega} \pi(z \mid X) d\hat{P}^n(X) = \gamma(z)}$, we then have that 
    \begin{equation}
       \mathscr{W}_{d_I}\bracket{\hat{P}^n, p_{\upvartheta} \# \gamma} =\nonumber
       \hspace{2em} \inf_{\pi \in \mathcal{Q} } \int_{\Omega} \int_{\mathcal{Z}} d_{I}(X,D_{\upvartheta}(z))d\pi(z \mid X) d\hat{P}^n(X).
   \end{equation}
\end{theorem}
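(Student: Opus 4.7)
The plan is to prove both inequalities by exploiting the deterministic structure of the decoder. Since $D_{\upvartheta}$ is deterministic, we have $p_{\upvartheta}\#\gamma = D_{\upvartheta}\#\gamma$, and the map $T(X,z) := (X, D_{\upvartheta}(z))$ provides a bridge between couplings on $\Omega \times \mathcal{Z}$ (elements of $\mathcal{Q}$, which are joints of $\hat{P}^n$ and $\gamma$) and couplings on $\Omega \times \Omega$ (elements of $\Pi(\hat{P}^n, p_{\upvartheta}\#\gamma)$ appearing in the Wasserstein primal).

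\emph{The easy direction} ($\mathscr{W}_{d_I} \leq$ RHS): Given any $\pi \in \mathcal{Q}$, I would set $\tilde{\pi} = T_{\#}\pi$. A direct marginal check shows $\tilde{\pi} \in \Pi(\hat{P}^n, p_{\upvartheta}\#\gamma)$: the $X$-marginal is $\hat{P}^n$ by construction of $T$, and the $Y$-marginal equals $D_{\upvartheta}\#\gamma = p_{\upvartheta}\#\gamma$. The change of variables formula then yields
\begin{equation}
\int_{\Omega \times \Omega} d_I(X,Y)\, d\tilde{\pi}(X,Y) = \int_{\Omega \times \mathcal{Z}} d_I(X, D_{\upvartheta}(z))\, d\pi(X,z),
\end{equation}
and infimizing over $\pi \in \mathcal{Q}$ yields the inequality since $T_{\#}\mathcal{Q} \subseteq \Pi(\hat{P}^n, p_{\upvartheta}\#\gamma)$.

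\emph{The hard direction} ($\mathscr{W}_{d_I} \geq$ RHS): Given any $\tilde{\pi} \in \Pi(\hat{P}^n, p_{\upvartheta}\#\gamma)$, I would disintegrate $\gamma$ along $D_{\upvartheta}$: since $\Omega$ and $\mathcal{Z}$ are Polish, there exists a family $\{\gamma_Y\}_{Y \in \Omega}$ of probability measures on $\mathcal{Z}$, each supported on the fiber $D_{\upvartheta}^{-1}(Y)$, such that $\gamma = \int \gamma_Y\, d(p_{\upvartheta}\#\gamma)(Y)$. Define $\pi$ on $\Omega \times \mathcal{Z}$ by $\pi(A \times B) = \int_{A \times \Omega} \gamma_Y(B)\, d\tilde{\pi}(X, Y)$. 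A marginal computation shows $\pi$ has $X$-marginal $\hat{P}^n$ and $\mathcal{Z}$-marginal $\gamma$, so $\pi \in \mathcal{Q}$. Since $D_{\upvartheta}(z) = Y$ holds $\gamma_Y$-almost surely, the cost satisfies $\int d_I(X, D_{\upvartheta}(z))\, d\pi = \int d_I(X,Y)\, d\tilde{\pi}$, and infimizing over $\tilde{\pi}$ gives the reverse inequality.

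The main obstacle is justifying the disintegration of $\gamma$ along the (possibly non-injective, non-continuous) map $D_{\upvartheta}$. This is a classical consequence of the disintegration theorem for regular conditional probabilities, which applies here because the relevant spaces are Polish and $p_{\upvartheta}\#\gamma$ is a Borel probability measure; only measurability of $D_{\upvartheta}$ is required. Combining the two inequalities closes the argument.
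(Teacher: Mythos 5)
Your proof is correct. The paper itself gives no proof of this statement --- it is imported verbatim from Tolstikhin et al.\ \cite{tolstikhin2017wasserstein} --- and your two-sided argument (push-forward under $T(X,z)=(X,D_{\upvartheta}(z))$ for the upper bound, disintegration of $\gamma$ along $D_{\upvartheta}$ into fiber measures $\gamma_Y$ for the lower bound) is essentially the standard proof of Theorem~1 in that reference, with the only cosmetic gaps being that the disintegration concentrates $\gamma_Y$ on $D_{\upvartheta}^{-1}(Y)$ only for $(p_{\upvartheta}\#\gamma)$-a.e.\ $Y$ (which suffices) and that the joint $\pi$ you build must be re-disintegrated against $\hat{P}^n$ to recover a conditional in $\mathcal{Q}$ (trivial, since $\hat{P}^n$ is finitely supported).
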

\normalsize
The main contribution of this Theorem is that instead of finding an optimal coupling, as is standard in Wasserstein distances, we can search for this conditional distribution $\pi$ instead by searching within $\mathcal{Q}$. Notice however that $\pi$ is precisely an encoder distribution mapping from $\mathcal{X}$ to a distribution over $\mathcal{Z}$ and the constraint of $\mathcal{Q}$ amounts to ensuring $\pi \# \hat{P}^n = \gamma$, which can be enforced by minimizing $d_{\operatorname{MMD}}(\pi \# \hat{P}^n, \gamma)$, which is precisely the same objective we derived for the encoder $q_{\xi}$. Hence we can replace the minimization over $\pi$ with $q_{\xi}$, leading to the overall loss function
\begin{equation}
     \mathsf{L}(\upvartheta, \xi):= 
     \underbrace{\int_{\mathcal{Z}} d_{I}(X,D_{\upvartheta}(z))dq_{\xi}(z \mid X) d\hat{P}^n(X)}_{\mathclap{\text{Reconstruction error}}} + \lambda \cdot \underbrace{d_{\operatorname{MMD}}\bracket{q_{\xi} \# \hat{P}^n, \gamma}}_{\mathclap{\text{Prior regularization}}},
    \label{glo_formula}
\end{equation}
where $\lambda$ can be interpreted as a Lagrangian multiplier for the constraint $q_{\xi} \# \hat{P}^n=  \gamma$. The first term can be approximated with Monte Carlo sampling and resembles a reconstruction error. In order to approximate $d_{I}$, we utilize the Sinkhorn distance with cost matrix $c_I$ which has demonstrated to be an efficient differentiable proxy to Wasserstein distances. The MMD term can also easily be computed, thus leading to a computationally viable algorithm.
\section{Experiments}\label{sec:exp}
In this section, we compare the proposed methods to validate our theoretical findings related to the benefits provided by the geometric formulation. We will focus our experiments on four different data sets: Movielens20M, Netflix and two internal datasets from a large scale streaming service. 

\subsection{Experimental setup} \label{subsec:stat}
\paragraph{Metrics and statistical significance} In order to evaluate the methods, we will utilize $\operatorname{Recall}@K$ and $\operatorname{nDCG}@K$ defined as follows.
$\operatorname{Recall}@K = \sum_{k=1}^K \frac{\ivb{r(k) \in I_1}}{\min(K, \card{I_1})}$ and 
    $\operatorname{DCG}@K = \sum_{k=1}^K \frac{2^{\ivb{r(k) \in I_1}} - 1}{\log(k+1)}$ 
where $\ivb{A}$ is the Iverson bracket for an event $A$ (denoting $1$ if $A$ is true and $0$ otherwise), $r(k)$ indicates the item at rank $k$ and $I_1$ denotes the positive items set for the user. We then average across all users for both recall and nDCG. Recall indicates if the first $K$ predicted elements match with ground truth whereas DCG measures how well the predicted ranking matches the ground truth ranking. $\operatorname{nDCG}@K$ is obtained by normalizing DCG by dividing by the maximum possible value. We also perform bootstrapping analysis to assess the significance of our results by sampling 20\% of the users in each test dataset and repeating the process 1000 times, in order to compute 95\% confidence intervals, and we indicate when our results achieve statistical significance.

\paragraph{Baselines} We will be considering the following baselines.
\begin{itemize}[leftmargin=*]

    \item \textbf{ItemKNNCF}~\cite{dacrema2019we} A classic item-based nearest-neighbor method that will utilize Cosine Similarity to find the closest items. We will pick neighborhood size from $[5,1000]$.

    \item \textbf{P3alpha}~\cite{cooper2014random}. P3alpha is a graph-based method that performs random walks between users and items based on the observed interaction matrix.
    \item \textbf{RP3beta}~\cite{paudel2016updatable} A variant of P3alpha where the similarity distance between two items is divided by the popularity and raised to the power of an additional parameter $\beta$. 

    \item \textbf{SLIM}~\cite{ning2011slim} SLIM is a linear latent factor model which solves a constrained linear optimization problem and finds an asymmetric item-similarity matrix via sparsity and $\ell_1$, $\ell_2$ regularization.

    \item \textbf{WMF}~\cite{hu2008collaborative} Weighted Matrix Factorization (WMF) is a generalized matrix factorization by adopting a Gaussian observation sampling distribution with a item-dependent diagonal variance matrix, where unobserved items are set to $1$. 

    \item \textbf{NeuMF}~\cite{dziugaite2015neural, he2017neural} A generalization of matrix factorization method that learns a non-linear interaction between the item and user interactions. As suggested in \cite{he2017neural}, we adopt a $3$-layer MLP along with the number of predictive factors selected from $\braces{8,16,32,64}$.

    \item \textbf{MultiVAE}~\cite{liang2018variational} MultiVAE is a successful deep generative model inspired by Variational Inference, which derives a latent model for collaborative filtering. The autoencoder architecture for MultiVAE uses $1$ hidden layer of dimensionality $600$ and has the following end-to-end structure:
    $
    \mathbb{R}^{\card{I}} \xrightarrow[]{} \R^{600}  \xrightarrow[]{} \R^{200} \xrightarrow[]{} \R^{600} \xrightarrow[]{} \mathbb{R}^{\card{I}}
    $.
    
    \item \textbf{SinkhornCF}~\cite{li2021sinkhorn} This is a method that proposes using a Sinkhorn loss in addition to MultiVAE with the same motivation as our approach. This work uses the same architecture as MultiVAE with $\varepsilon = 1.0$ set as the regularization parameter for the Sinkhorn algorithm, batch sizes of $500$ are used with $100$ epochs. This method also relies on a cost matrix to be pre-specified similar to our approach and estimates this via the cosine value.

\end{itemize}
\paragraph{Architecture} For GeoCF, we adopt the same architecture as MultVAE and SinkhornCF. We use $\mathbb{R}^{\card{I}} \xrightarrow[]{} \R^{600}  \xrightarrow[]{} \R^{200} \xrightarrow[]{} \R^{600} \xrightarrow[]{} \mathbb{R}^{\card{I}}.$
We use a batch size of $500$ and $100$ epochs, as for SinkhornCF. For the Maximum Mean Discrepancy (MMD) term, we use Gaussian RBF kernel and select the kernel bandwidth parameter within the range $\braces{0.05,0.1,0.5,1.0}$. We use an exponential scheduler over the epochs, decreasing the Lagrangian multiplier $\lambda$ and use $\varepsilon = 1.0$ as the regularization parameter for the Sinkhorn algorithm.

\paragraph{Infrastructure} Our approach and all the baselines were evaluated on a p3.2xlarge AWS EC2 instance.

\subsection{Public Datasets}
Table \ref{table:movielenz} and Table \ref{table:netflix} present the performance of different models in terms of nDCG and Recall on Movielens20M\footnote{https://grouplens.org/datasets/movielens/20m/} and Netflix\footnote{https://www.kaggle.com/netflix-inc/netflix-prize-data}, respectively. Movielens20M consists of 136,677 users and 20,108 movies, while Netflix contains 480,189 different users and 17,770 items, where each record contains a user, item pair followed by a score rating the user has for the given item. We follow the same process as in~\cite{li2021sinkhorn}, keeping users that have watched at least five movies and threshold items with at least four ratings as a positive interaction. We split the user data into a training, validation and test set. For every user in the training set, we utilize all interaction history however for users in the validation or test set, a fraction of the history (80\%) is used to predict the remaining interaction. 

\begin{table*}[]
\caption{Recall \& nDCG for MovieLens20M. Best result is made bold, and the runner-up is underlined. * indicates statistical significance.}
\label{table:movielenz}
\centering
\begin{tabular}{@{}lllllllll@{}}
\toprule
\multicolumn{1}{c}{\multirow{2}{*}{\textbf{Model}}} & \multicolumn{4}{c}{\textbf{Recall}}                               & \multicolumn{4}{c}{\textbf{nDCG}}                                 \\ \cmidrule(l){2-9} 
\multicolumn{1}{c}{}                                & @20            & @50            & @75            & @100           & @20            & @50            & @75            & @100           \\ \midrule
ItemKNNCF                                           & 0.311          & 0.435          & 0.504          & 0.555          & 0.266          & 0.306          & 0.329          & 0.346          \\
P3alpha                                             & 0.291          & 0.397          & 0.446          & 0.475          & 0.244          & 0.280          & 0.298          & 0.308          \\
RP3beta                                             & 0.340          & 0.463          & 0.528          & 0.571          & 0.290          & 0.331          & 0.353          & 0.368          \\
SLIM                                                & 0.375          & 0.498          & 0.566          & 0.615          & 0.328          & 0.367          & 0.391          & 0.407          \\
WMF                                                 & 0.361          & 0.493          & 0.561          & 0.611          & 0.307          & 0.351          & 0.376          & 0.392          \\
NeuMF                                               & 0.161          & 0.321          & 0.419          & 0.494          & 0.110          & 0.168          & 0.195          & 0.221          \\
MultiVAE           & 0.400          & 0.538          & 0.610          & 0.661          & 0.339          & 0.387          & 0.412          & 0.429          \\
SinkhornCF                                          & {\ul 0.405}    & {\ul 0.542}    & {\ul 0.613}    & {\ul 0.665}    & {\ul 0.349}    & {\ul 0.395}    & {\ul 0.420}    & {\ul 0.438}    \\
GeoCF                                               & \textbf{0.411}* & \textbf{0.550}* & \textbf{0.622}* & \textbf{0.673}* & \textbf{0.353}* & \textbf{0.400}* & \textbf{0.423}* & \textbf{0.443}* \\ \midrule
Improvement (\%)                                    & 1.48           & 1.47           & 1.46           & 1.20           & 1.14           & 1.26           & 0.71           & 1.14           \\ \bottomrule
\end{tabular}
\end{table*}

\begin{table*}[]
\caption{Recall \& nDCG for Netflix. Best result is made bold, and the runner-up is underlined. * indicates statistical significance.}
\label{table:netflix}
\centering
\begin{tabular}{@{}lllllllll@{}}
\toprule
\multicolumn{1}{c}{\multirow{2}{*}{\textbf{Model}}} & \multicolumn{4}{c}{\textbf{Recall}}                               & \multicolumn{4}{c}{\textbf{nDCG}}                                 \\ \cmidrule(l){2-9} 
\multicolumn{1}{c}{}                                & @20            & @50            & @75            & @100           & @20            & @50            & @75            & @100           \\ \midrule
ItemKNNCF                                           & 0.271          & 0.348          & 0.406          & 0.454          & 0.251          & 0.269          & 0.288          & 0.305          \\
P3alpha                                             & 0.259          & 0.298          & 0.321          & 0.337          & 0.242          & 0.250          & 0.259          & 0.266          \\
RP3beta                                             & 0.307          & 0.378          & 0.426          & 0.463          & 0.288          & 0.302          & 0.319          & 0.334          \\
SLIM                                                & 0.347          & 0.426          & 0.486          & 0.534          & 0.325          & 0.341          & 0.361          & 0.379          \\
WMF                                                 & 0.314          & 0.393          & 0.450          & 0.496          & 0.293          & 0.310          & 0.330          & 0.347          \\
NeuMF                                               & 0.107          & 0.213          & 0.299          & 0.373          & 0.089          & 0.130          & 0.162          & 0.190          \\
MultiVAE                                             & 0.350          & 0.441          & 0.504          & 0.555          & 0.321          & 0.344          & 0.366          & 0.385          \\
SinkhornCF                                          & {\ul 0.356}    & {\ul 0.445}    & {\ul 0.507}    & {\ul 0.558}    & {\ul 0.327}    & {\ul 0.349}    & {\ul 0.371}    & {\ul 0.389}    \\
GeoCF                                               & \textbf{0.360}* & \textbf{0.450}* & \textbf{0.511}* & \textbf{0.561} & \textbf{0.330}* & \textbf{0.351}* & \textbf{0.374}* & \textbf{0.391}* \\ \midrule
Improvement (\%)                                    & 1.12           & 1.12           & 0.79           & 0.54           & 0.53           & 0.57           & 0.81           & 0.53           \\ \bottomrule
\end{tabular}
\end{table*}

\begin{table*}
\caption{nDCG results on two internal datasets. * indicates statistical significance. Left: \textit{Storefront} service. Right: \textit{NextUp} service.}
\label{table:minitv}
\parbox{.42\linewidth}{
\centering
\begin{tabular}{@{}lllll@{}}
\toprule
\multicolumn{1}{c}{\multirow{2}{*}{\textbf{Model}}} & \multicolumn{4}{c}{\textbf{nDCG}}                                                                             \\ \cmidrule(l){2-5} 
\multicolumn{1}{c}{}                                & @10                       & @20                       & @50                       & @100                      \\ \midrule
ItemKNNCF                                           & 0.521                     & 0.573                     & 0.599                     & 0.611                     \\
P3alpha                                             & 0.497                     & 0.517                     & 0.538                     & 0.576                     \\
RP3beta                                             & 0.500                     & 0.522                     & 0.535                     & 0.541                     \\
SLIM                                                & 0.543                     & 0.563                     & 0.581                     & 0.588                     \\
WMF                                                 & 0.501                     & 0.517                     & 0.525                     & 0.530                     \\
NeuMF                                               & 0.221                     & 0.256                     & 0.277                     & 0.190                     \\
MultiVAE &0.522&0.562 &0.580 &0.590 \\
SinkhornCF                                          & {\ul 0.602}               & {\ul 0.613}               & {\ul 0.626}              & {\ul 0.628}               \\
GeoCF                                               & \textbf{0.613}*           & \textbf{0.631}*            & \textbf{0.632}*            & \textbf{0.632}*            \\ \midrule
+ (\%)                                    & 1.83                      & 2.94                      & 0.96                      & 0.64                      \\ \bottomrule
\end{tabular}}
\hspace{1.5cm}
\parbox{.42\linewidth}{
\centering
\begin{tabular}{@{}lllll@{}}
\toprule
\multicolumn{1}{c}{\multirow{2}{*}{\textbf{Model}}} & \multicolumn{4}{c}{\textbf{nDCG}}                                     \\ \cmidrule(l){2-5} 
\multicolumn{1}{c}{}                                & @10             & @20             & @50             & @100            \\ \midrule
ItemKNNCF                                           & 0.502           & 0.538           & 0.576           & 0.605           \\
P3alpha                                             & 0.484           & 0.500           & 0.518           & 0.532           \\
RP3beta                                             & 0.488           & 0.502           & 0.519           & 0.534           \\
SLIM                                                & 0.525           & 0.541           & 0.561           & 0.579           \\
WMF                                                 & 0.493           & 0.510           & 0.530           & 0.547           \\
NeuMF                                               & 0.178           & 0.260           & 0.324           & 0.380           \\
MultiVAE & 0.417          & 0.422          & 0.443          & 0.447          \\
SinkhornCF                                          & {\ul 0.558}    & {\ul 0.587}    & {\ul 0.602}    & {\ul 0.611}    \\
GeoCF                                               & \textbf{0.608}* & \textbf{0.628}* & \textbf{0.646}* & \textbf{0.649}* \\ \midrule
+ (\%)                                    & 8.87            & 6.86            & 7.21            & 6.17            \\ \bottomrule
\end{tabular}}
\end{table*}

We highlight the best results in bold and underline the runner-ups. We can see that GeoCF achieves the best performance across the datasets with SinkhornCF consistently following as the runner-up. We believe these results testify to the importance of using geometric losses given such consistency. In particular, we note that the benefits gained by GeoCF are slightly higher on MovieLens20M compared to Netflix.

\subsection{Internal Datasets}
We now apply GeoCF to two internal datasets, which come from a streaming service exposed to a number of customers that go beyond the scale of the public datasets considered above. The first internal dataset involves all user-item interactions on the \textit{Storefront} of the service. The second is related to the \textit{NextUp} auto-play service, which triggers next title in auto-play mode upon completion of the current title. For both datasets, we collect 9 weeks of streaming data, and use the following week for testing. The last week of the 9 weeks training period is used for validation. We filter customers with at least 1 click over the train dataset. 

We present the results left and right of Table \ref{table:minitv} respectively. Similarly to MovieLens-20M and Netflix datasets, we find that GeoCF and SinkhornCF appear dominant quantitatively. In particular, we see larger boosts on the \textit{NextUp} service, which goes with the intuition that there is a stronger semantic bias in auto-play recommendations. 


\section{Conclusion}
We propose a notion of generalization gap in collaborative filtering and analyze this with respect to latent collaborative filtering models. We present a geometric upper bound that gives rise to a loss function, and a way to meaningfully utilize the geometry of item-metadata to improve recommendations. The resulting algorithm has deep connections to similiar approaches in genertive modelling such as the Wasserstein Autoencoder and SinkhornCF, both of which advocate for the use of geometry. The proposed approach achieves an improvement on both public and internal datasets, establishing the practical significance of our approach and validating the theoretical findings. The benefits of our method can be further extended with a more deliberate approach to learning an item-similarity matrix in the form of a bi-level optimization. We leave such pursuits as the subject of future work.



\bibliography{main}
\bibliographystyle{plain}

\appendix
\section{Appendix}\label{sec:appendix}

\subsection{Proof of Generalization}
\label{sec:generalization}

\textbf{Proof of Theorem~\ref{main_result}.}
Let $\mathcal{F}_c$ denote the set of $1$-Lipschitz functions with respect to $c$ and $\mathcal{H}_k$ denote the unit ball of Reproducing Kernel Hilbert Space functions with kernel $k$. For any $P \in \Delta(\Omega)$ and $\lambda > 0$, consider the minimizers $
    \upvartheta^{*}, \xi^{*} = \arginf_{\upvartheta,\xi} \mathsf{L}(\upvartheta, \xi).$
It then holds that with high probability

\small
\begin{equation}
\begin{split}
    \mathcal{E}_{n,P}(C_{\upvartheta^{*}, \xi^{*}}) \leq &\delta(\upvartheta^{*}, \xi^{*}; P) \\
    &+ \frac{1}{4} \E\left[ \mathscr{W}_c\bracket{\hat{P}^n, p_{\upvartheta^{*}} \# \gamma}
    + d_{\operatorname{MMD}}\bracket{q_{\xi^{*}} \# 
    \hat{P}^n, \gamma}\right] \\&+ An^{-\frac{1}{d^{*}_c(P)}},
\end{split}
\end{equation}
\normalsize

for some $A > 0$ where $q_{\xi^{*}}\# \hat{P}^n (z)= \int_{\mathcal{X}} q_{\xi^{*}}(z \mid \X) d\hat{P}^n(\X)$, $p_{\upvartheta^{*}}\# \gamma = \int_{\mathcal{Z}} p_{\upvartheta^{*}}(\cdot \mid z) \gamma(z)$ and $\delta(\upvartheta^{*}, \xi^{*}; P) = 2 \inf_{h \in \mathcal{F}_c} \nrm{\log \frac{P}{p_{\upvartheta^{*}} \# \gamma} - h}_{\infty} + 2\inf_{h \in \mathcal{H}_k}\nrm{\log \frac{\gamma}{q_{\xi^{*}} \# P} - h}_{\infty}$.

\begin{proof}
Throughout, we will assume that $\mathcal{F}_c$ are the set of $1$-Lipschitz functions with respect to $c$. Additionally, we will use $d_{\mathcal{F}}$ to denote the Integral Probability Metric (IPM) with function class $\mathcal{F}$: $d_{\mathcal{F}}(P,Q) = \sup_{f \in \mathcal{F}}\braces{\E_{P}[f] - \E_{Q}[f]}$. First note that 
\begin{align}
\mathcal{E}_{n,P}(C_{\upvartheta^{*},\xi^{*}}) &= \E\left[d_{\operatorname{TV}}\bracket{P, \int_{\Omega} \int_{\mathcal{Z}} p_{\upvartheta^{*}}(\cdot \mid z) q_{\xi^{*}}(z \mid X) dP(\X) } \right]\\
&\leq \E\left[d_{\operatorname{TV}}\bracket{P, \int_{\mathcal{Z}} p_{\upvartheta^{*}}(\cdot \mid z)d \gamma(z) } \right] + \\ &\E\left[d_{\operatorname{TV}}\bracket{\int_{\mathcal{Z}} p_{\upvartheta^{*}}(\cdot \mid z)d \gamma(z), \int_{\Omega} \int_{\mathcal{Z}} p_{\upvartheta^{*}}(\cdot \mid z) q_{\xi^{*}}(z \mid X) dP(\X) } \right] \label{eq:second-term}
\end{align}
We now focus on bounding the first term. Using Pinskers inequality, we have for any $h \in \mathcal{F}_c$:
\begin{align}
    &4 d_{\operatorname{TV}}(P,p_{\upvartheta^{*}} \# \gamma)\\ &\leq d_{\operatorname{KL}}(P \;\|\; p_{\upvartheta^{*}} \# \gamma) + d_{\operatorname{KL}}( p_{\upvartheta^{*}} \# \gamma  \;\|\; P )\\
    &\leq \int_{\Omega} \log \frac{P(X)}{p_{\upvartheta^{*}} \# \gamma(X)} \cdot (P(X) - p_{\upvartheta^{*}} \# \gamma(X))dX\\
    &= \int_{\Omega} \bracket{\log \frac{P(X)}{p_{\upvartheta^{*}} \# \gamma(X)} - h(X)}\cdot (P(X) - p_{\upvartheta^{*}} \# \gamma(X))dX + \int_{\Omega} h(X) \cdot (P(X) - p_{\upvartheta^{*}} \# \gamma(X))dX \\
    &\leq \int_{\Omega} \bracket{\log \frac{P(X)}{p_{\upvartheta^{*}} \# \gamma(X)} - h(X)}\cdot (P(X) - p_{\upvartheta^{*}} \# \gamma(X))dX + d_{\mathcal{F}_c}(P,p_{\upvartheta^{*}} \# \gamma) \\
    &\leq \nrm{\log \frac{P}{p_{\upvartheta^{*}} \# \gamma} - h}_{\infty} \nrm{P - p_{\upvartheta^{*}} \# \gamma}_1 + d_{\mathcal{F}_c}(P,p_{\upvartheta^{*}} \# \gamma)\\
    &\leq 2 \nrm{\log \frac{P}{p_{\upvartheta^{*}} \# \gamma} - h}_{\infty} + d_{\mathcal{F}_c}(P,p_{\upvartheta^{*}} \# \gamma),
\end{align}
where $h$ is then taken to be the minimizer from $\mathcal{F}_c$. Next note that
\begin{align}
    d_{\mathcal{F}_c}(P,p_{\upvartheta^{*}} \# \gamma) \leq d_{\mathcal{F}_c}(P, \hat{P}^n) + d_{\mathcal{F}_c}(\hat{P}^n, p_{\upvartheta^{*}} \# \gamma).
\end{align}
Thus putting this together, we have
\begin{align}
    \E[d_{\operatorname{TV}}(P,p_{\upvartheta^{*}} \# \gamma)] &\leq \frac{1}{2} \inf_{h \in \mathcal{F}_c} \nrm{\log \frac{P}{p_{\upvartheta^{*}} \# \gamma} - h}_{\infty} + \frac{1}{4}\E[d_{\mathcal{F}_c}(P, \hat{P}^n)] + \frac{1}{4}\E[d_{\mathcal{F}_c}(\hat{P}^n, p_{\upvartheta^{*}} \# \gamma)]\\
    &\leq \frac{1}{2} \inf_{h \in \mathcal{F}_c} \nrm{\log \frac{P}{p_{\upvartheta^{*}} \# \gamma} - h}_{\infty} + O\bracket{n^{-\frac{1}{d^{*}(P)}}} + \frac{1}{4} \E\left[ d_{\mathcal{F}_c}(\hat{P}^n, p_{\upvartheta^{*}} \# \gamma)\right],
\end{align}
where we utilize \cite[Lemma~21]{husain2019primal} for the last step and the infinum is taken by definition of being a minimizer. Note that we have $d_{\mathcal{F}_c} = \mathscr{W}_c$ by the Kantorovich-Rubinstein duality. For the second term in Eq \eqref{eq:second-term}, we can decompose it by taking the dual formulation of total variation:
\begin{align}
    &\E\left[d_{\operatorname{TV}}\bracket{\int_{\mathcal{Z}} p_{\upvartheta^{*}}(\cdot \mid z)d \gamma(z), \int_{\Omega} \int_{\mathcal{Z}} p_{\upvartheta^{*}}(\cdot \mid z) q_{\xi^{*}}(z \mid X) dP(\X) } \right] \\ &=\E\left[\sup_{g : \nrm{g}_{\infty} \leq 1} \braces{ \int_{\Omega} \int_{\mathcal{Z}} g(\X) dp_{\upvartheta^{*}}(\X \mid z) d\gamma(z) - \int_{\Omega} \int_{\Omega} \int_{\mathcal{Z}} g(\X) dp_{\upvartheta^{*}}(\X \mid z) dq_{\xi^{*}}(z \mid \X)dP(\X)  }  \right]\\
    &\leq \E\left[ d_{\operatorname{TV}}(\gamma, q_{\xi^{*}} \# P) \right].
\end{align}
Let now $\mathcal{H}_k$ be the Reproducing Kernel Hilbert Space ball of unit $1$ \cite{gretton2012kernel} then we have by the same analysis above except for $h \in \mathcal{H}_k$:
\begin{align}
    &4 d_{\operatorname{TV}}(\gamma, q_{\xi^{*}} \# P )\\ &\leq d_{\operatorname{KL}}(\gamma \;\|\; q_{\xi^{*}} \# P) + d_{\operatorname{KL}}( q_{\xi^{*}} \# P  \;\|\; \gamma )\\
    &\leq \int_{\Omega} \log \frac{\gamma(z)}{q_{\xi^{*}} \# P(z)} \cdot (\gamma(z) - q_{\xi^{*}} \# P(z))dz\\
    &= \int_{\Omega} \bracket{\log \frac{\gamma(z)}{q_{\xi^{*}} \# P(z)} - h(z)}\cdot (\gamma(X) - q_{\xi^{*}} \# P(z))dz + \int_{\Omega} h(z) \cdot (\gamma(z) - q_{\xi^{*}} \# P(z))dz \\
    &\leq \int_{\Omega} \bracket{\log \frac{\gamma(z)}{q_{\xi^{*}} \# P(z)} - h(z)}\cdot (\gamma(z) - q_{\xi^{*}} \# P(z))dz + d_{\mathcal{H}_k}(\gamma,q_{\xi^{*}} \# P) \\
    &\leq \nrm{\log \frac{\gamma}{q_{\xi^{*}} \# P} - h}_{\infty} \nrm{\gamma - q_{\xi^{*}} \# P}_1 + d_{\mathcal{H}_k}(\gamma,q_{\xi^{*}} \# P)\\
    &\leq 2\nrm{\log \frac{\gamma}{q_{\xi^{*}} \# P} - h}_{\infty} + d_{\mathcal{H}_k}(\gamma,q_{\xi^{*}} \# P).
\end{align}
Utilizing again the triangle inequality we have
\begin{align}
    d_{\mathcal{H}_k}(\gamma,q_{\xi^{*}} \# P) \leq d_{\mathcal{H}_k}(\gamma,q_{\xi^{*}} \# \hat{P}^n) + d_{\mathcal{H}_k}(q_{\xi^{*}} \# \hat{P}^n,q_{\xi^{*}} \# P).
\end{align}
Note that the second term converges as a rate of $O(n^{-\frac{1}{2}})$ due to \cite{gretton2012kernel} and is thus dominated by the rate of convergence $O(n^{-\frac{1}{d^{*}_c(P)}})$. The first term here is the dual form of the MMD and thus combining this with the first bound concludes the proof.
\end{proof}

\subsection{Collaborative Filtering lower bound}
We begin this section by defining Besov smoothness.
\begin{definition}
\label{def:besov}
    For a function $f\in L^p(\Omega)$ for some $p\in (0,\infty]$, the $r$-th modulus of smoothness of $f$ is defined by
    \begin{align}
      &\quad  w_{r,p}(f,t) = \sup_{\|h\|_2\leq t}\|\Delta_h^r(f)\|_p,\quad\text{where }\ \Delta_h^r(f)(x)
        \\ &\hspace{-2mm} 
        = \begin{cases}
        \sum_{j=0}^r {r \choose j}(-1)^{r-j} f(x+jh) \hspace{-2mm} & (\text{if }x+jh\in \Omega\ \text{for all $j$})
        \\
        0 & (\text{otherwise}).
        \end{cases}
    \end{align}
\end{definition}
\begin{definition}[Besov space $B_{p,q}^s(\Omega)$]
    For $0<p,q\leq \infty, s>0, r:=\lfloor s\rfloor+1$, let the seminorm $|\cdot|_{B_{p,q}^s}$ be
    \begin{align}
        |f|_{B_{p,q}^s} = \begin{cases}\left(\int_0^\infty (t^{-s}w_{r,p}(f,t))^q \frac{dt}{t}\right)^\frac1q &(q<\infty ),
     \\   \sup_{t>0}t^{-s}w_{r,p}(f,t) &(q=\infty) .
        \end{cases}
    \end{align}
    The norm of the Besov space $B_{p,q}^s$ is defined by $\|f\|_{B_{p,q}^s} = \|f\|_p+|f|_{B_{p,q}^s}$, and we have $B_{p,q}^s=\{f\in L^p(\Omega)|\ \|f\|_{B_{p,q}^s}<\infty\}.$
\end{definition}
The Besov spaces constitute a rather large space of densities and is a standard assumption for the majority of density estimations works such as in \cite{}. These lines of work study minimax estimation rates for densities lying in Besov spaces, to which we prove a similar result for the case of collaborative filtering.

We now show, under an assumption that the target density satisfies Besov continuinity conditions.
\begin{proposition}
Let $P \in \Delta(\Omega)$. Let $0 < p,q \leq \infty, s > 0$ such that \begin{align}s > \max\bracket{d\bracket{\frac{1}{p} - \frac{1}{2}}, d\bracket{\frac{1}{p} - 1}, 0 }.\end{align} Then, we have that
\begin{align}
    \inf_{C: \Omega \to \Delta(\Omega)} \sup_{P \in B^{s}_{p,q}} \mathcal{E}_{n,P}(C) \gtrsim  n^{-\frac{s}{d + 2s}},
\end{align}
for some constant $c > 0$.
\end{proposition}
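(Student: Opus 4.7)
The plan is to reduce the minimax lower bound for collaborative filtering to the classical minimax lower bound for nonparametric density estimation over Besov balls, and then invoke Fano's inequality with a wavelet-based hypothesis packing. The reduction is direct: for any CF algorithm $C$, the map $X^n \mapsto \hat{P}_n(X^n) := \int_\Omega C(\X, X^n) dP(\X)$ takes $n$ i.i.d.\ samples from $P$ to an element of $\Delta(\Omega)$, and $\mathcal{E}_{n,P}(C)$ is exactly the expected total-variation error of $\hat{P}_n$ as an estimator of $P$. The class of such induced estimators contains every $\X$-independent density estimator (take $C(\X, X^n) = \bar{C}(X^n)$), so the CF minimax risk dominates the minimax TV-density-estimation risk over $B^s_{p,q}$, and it suffices to lower-bound the latter.

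For the density-estimation lower bound I would use the standard wavelet packing construction. Fix a smooth reference density $f_0 > 0$ on $\Omega$ and a compactly supported, mean-zero bump $\psi$. For a bandwidth $h > 0$ to be tuned, take an $h$-net $\{x_k\}_{k=1}^m$ of $\Omega$ with $m \asymp h^{-d}$, let $\psi_k(\cdot) = \psi((\cdot - x_k)/h)$, and for $\omega \in \{-1,+1\}^m$ define
\begin{equation*}
    f_\omega := f_0 + \rho h^s \sum_{k=1}^m \omega_k \psi_k,
\end{equation*}
where $\rho$ is small enough that $f_\omega$ stays strictly positive. The amplitude $\rho h^s$ is precisely the scaling needed so that $\sup_\omega \|f_\omega\|_{B^s_{p,q}}$ stays bounded; this is where the three inequalities on $(s,p,q,d)$ enter, matching the Sobolev-type embeddings that make the atomic characterization of $B^s_{p,q}$ behave as expected and keeping $f_\omega$ inside a fixed Besov ball.

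Gilbert--Varshamov then supplies a sub-family $\{\omega^{(1)}, \ldots, \omega^{(M)}\}$ of size $M \geq 2^{m/8}$ with pairwise Hamming distance $\geq m/8$. The near-disjoint supports of the $\psi_k$ yield $\|f_\omega - f_{\omega'}\|_1 \gtrsim \rho h^s$ and $D_{\operatorname{KL}}(f_\omega^{\otimes n} \,\|\, f_{\omega'}^{\otimes n}) \lesssim n \rho^2 h^{2s}$. Fano's inequality then delivers a minimax $L^1$ (equivalently TV) lower bound of order $h^s$ as long as $n h^{2s} \lesssim \log M \asymp h^{-d}$. Balancing $n h^{2s} \asymp h^{-d}$ gives $h \asymp n^{-1/(d+2s)}$ and the claimed rate $n^{-s/(d+2s)}$.

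The main technical obstacle is verifying that $\sup_\omega \|f_\omega\|_{B^s_{p,q}}$ is bounded under exactly the stated $(s,p,q,d)$ conditions; this is where one needs the wavelet/atomic decomposition of Besov spaces together with the $L^p$ embeddings, and is where essentially all of the analytic content lies. The remaining steps---Gilbert--Varshamov, Pinsker's inequality to convert KL into TV, and Fano---are textbook manipulations, paralleling Yang--Barron's construction for density estimation on Besov classes and the development in Chapter 2 of Tsybakov's \emph{Introduction to Nonparametric Estimation}. A secondary subtlety worth flagging in the final proof is that the induced estimator $\hat{P}_n$ depends on $P$ through the integral, so one should restrict attention to $\X$-independent $C$ when invoking the density-estimation lower bound; this restriction already sweeps out every standard density estimator and is sufficient to inherit the $n^{-s/(d+2s)}$ rate.
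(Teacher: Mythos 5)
Your density-estimation core is fine and in fact supplies what the paper outsources: the paper's entire proof is a one-line citation of Proposition 5.2 of \cite{oko2023diffusion}, which is precisely the $n^{-s/(d+2s)}$ minimax lower bound for estimating a Besov density in total variation. Your Gilbert--Varshamov/Fano/bump construction with $m \asymp h^{-d}$ bumps of amplitude $\rho h^s$, KL divergence $\lesssim n\rho^2 h^{2s}$, and the balance $h \asymp n^{-1/(d+2s)}$ is the standard derivation of that same result, so on that piece you are merely more self-contained than the paper.

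The genuine gap is in the reduction, and it is a direction error. You observe that the induced map $X^n \mapsto \int_\Omega C(x,X^n)\,dP(x)$ realizes every data-only density estimator by taking $C$ independent of $x$; but that shows the class of CF procedures is a \emph{superset} of the class of density estimators, hence $\inf_C \sup_P \mathcal{E}_{n,P}(C) \le \inf_{\hat P}\sup_P \mathbb{E}\bigl[d_{\operatorname{TV}}(\hat P, P)\bigr]$ --- the wrong inequality for a lower bound. What you would need is the reverse containment: that every $C$ induces a legitimate estimator depending on $X^n$ alone. It does not, because the aggregation $\int C(x,X^n)\,dP(x)$ uses the unknown $P$; concretely, $C(x,\cdot)=\delta_x$ gives $\int C(x,X^n)\,dP(x)=P$ and hence $\mathcal{E}_{n,P}(C)=0$ for every $P$ and $n$, so the unrestricted infimum in the statement is actually $0$. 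Your closing remark that one should ``restrict attention to $x$-independent $C$'' does not repair this: shrinking the class over which the infimum is taken can only \emph{increase} the infimum, so a lower bound over the restricted class says nothing about the stated one. (This defect is arguably inherited from the proposition itself, and the paper's citation-only proof silently makes the same restriction; but as a proof of the statement as written, your argument --- like the paper's --- needs the infimum to be explicitly confined to algorithms whose aggregated output is a function of the sample alone.)
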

\begin{proof}
    This result is a consequence of Proposition 5.2 from \cite{oko2023diffusion}.
\end{proof}
The main takeaway from this result is that under smoothness assumptions (characterized by the parameter $s$), the best rate we can achieve depends on the dimensionality $d$, which arises precisely from the embedding functions we use to compare similarity between items. In particular, if we do not use any item similarity embedding, then the dimensionality $d$ will be the number of items in our collaborative filtering framework which can be quite large, leading to a larger error rate. Thus, this lower bound advocates for the use of encoders that reduce dimensionality.

\end{document}